\newtheorem{theorem}{Theorem}
\newtheorem{definition}[theorem]{Definition}
\newtheorem{example}{Example}
\newtheorem{remark}{Remark}
\begin{document}
\sloppy
\title{Multiset Codes for Permutation Channels}%
\author{\IEEEauthorblockN{Mladen Kova\v cevi\' c and Dejan Vukobratovi\'c}
        \IEEEauthorblockA{Department of Power, Electronics, and Communication Engineering, \\
                          University of Novi Sad, 21000 Novi Sad, Serbia \\
                          Emails: \{kmladen, dejanv\}@uns.ac.rs }}%
\maketitle
\begin{abstract}
This paper introduces the notion of \emph{multiset codes} as relevant to 
the problem of reliable information transmission over permutation channels. 
The motivation for studying permutation channels comes from the effect of out 
of order delivery of packets in some types of packet networks. The proposed 
codes are a generalization of the so-called \emph{subset codes}, recently 
proposed by the authors. Some of the basic properties of multiset codes are 
established, among which their equivalence to integer codes under the Manhattan 
metric. The presented coding-theoretic framework follows closely the one proposed 
by K\"otter and Kschischang for the operator channels. The two mathematical 
models are similar in many respects, and the basic idea is presented in a way 
which admits a unified view on coding for these types of channels. 
\end{abstract}
%
%\begin{IEEEkeywords}
%Multiset codes, Manhattan metric, subset codes, permutation channel, 
%packet networks, routing. 
%\end{IEEEkeywords}
%
%
%
\section{Introduction}
In this paper, we study the problem of error correction in the \emph{permutation 
channels}. We aim to present a coding-theoretic framework for such channels, which 
is based on the notion of \emph{multiset codes}. These codes are a generalization 
of the so-called \emph{subset codes} recently proposed by the authors \cite{subset}, 
and are argued to be appropriate constructs for reliable information transmission 
over permutation channels. 
\par Permutation channels arise, for example, as models for an end-to-end transmission 
in some types of packet networks. Namely, certain network protocols provide no guarantees 
on the in-order delivery of packets \cite{galag}, and in addition to dropping some packets, 
delivering erroneous packets, etc., have the effect of delivering an essentially random 
permutation of the packets sent. 
Examples include a number of recently popular networking technologies such as mobile 
ad-hoc networks, vehicular networks, delay tolerant networks, wireless sensor networks, etc. 
In the following section we will give a more detailed 
description of the channel model that we consider, as well as the basic idea underlying 
the definition of codes for such a channel. This idea is the same as the one presented 
by K\"otter and Kschischang in their seminal paper \cite{kk}, which gave rise to the 
definition of the operator channel as an appropriate model of random linear network 
coded networks, and codes in projective spaces as adequate constructs for such a channel. 
\par In Section \ref{subset_codes} we will give an overview of the subset coding approach 
presented in \cite{subset}. This approach is then extended and generalized by introducing 
the so-called multiset codes in Section \ref{multiset_codes}. Some basic properties 
of multiset codes and their advantages over subset codes are also described in this 
section. Finally, Section \ref{examples} provides two simple, but fairly general 
examples of both types of codes. 
\section{The channel model}
\label{channel}
Let $S$ be a finite alphabet with $|S|=q>0$ symbols. Without loss of generality, 
we assume that $S=\{1,2,\ldots,q\}$. By a \emph{permutation channel} over $S$ we 
understand the channel whose inputs are sequences of symbols from $S$, and which, 
for any input sequence, outputs a random permutation of this sequence. As noted in 
the Introduction, such channels arise in some types of packet networks in which the 
packets comprising a single message are routed separately and are frequently sent 
over different routes in the network. Therefore, the receiver cannot rely on them 
being delivered in any particular order. 
\par In addition to random permutations, the channel can have other deleterious 
effects on the transmitted sequence, such as \emph{insertions}, \emph{deletions}, 
and \emph{substitutions} of symbols. Substitutions (i.e., errors) are random 
alterations of symbols, usually caused by noise. Insertions and deletions can be 
thought of as synchronization errors, where a symbol is read twice, or is skipped, 
because of the incorrect timing of the receiver's clock. There are also various 
other situations where they occur (see, e.g., \cite{schul}). For example, in a 
networking scenario mentioned above, packet deletions can be caused by network 
congestion and consequent buffer overflows in the routers. Note that, as the 
transmitted sequence is being permuted, \emph{erasures} are essentially the same 
as deletions, because the position of the erased symbol (in the original sequence) 
cannot be deduced. To conclude, the channel considered in this paper is the 
permutation channel with insertions, deletions, and substitutions. 
\begin{remark}
In the case when the permutation channel models a packet network, it should be 
pointed out that the framework proposed here assumes an end-to-end network 
transmission model, and consequently, that coding is done on the transport or application 
layer. It is a frequent assumption in this scenario that only deletions can occur 
in the channel (apart from permutations). Namely, it is understood that errors are 
addressed by error-detecting and error-correcting codes at the lower layers 
(link and physical layer). 
\end{remark}
\subsection{Coding for the permutation channel}
We now state the main idea in a somewhat informal way; the precise definitions 
are given in subsequent sections. 
\par Codes for various types of channel impairments that we consider (insertions, 
deletions, and substitutions) have been thoroughly studied in the literature; but 
how does one deal with random permutations of the symbols? One solution relies on 
the following simple idea: Information should be encoded in an object which is 
invariant under permutations. An example of such an object is a \emph{set}. Based 
on this observation, the authors have introduced the so-called subset codes as 
relevant for the above channel model \cite{subset}, the codewords of which are 
taken to be subsets of an alphabet $S$. An appropriate metric is specified 
on the space of all subsets of $S$, after which the definition of codes and their 
parameters follows familiar lines. In the present paper we further generalize this 
idea by noting that there exists an even more general object invariant under 
permutations -- a \emph{multiset}. Informally, a multiset is a set with repetitions 
of elements allowed. Clearly, for a given alphabet $S$, there are more multisets 
of certain cardinality than there are sets of the same cardinality, and hence, 
this approach can increase the code rate, among other advantages. 
\par To conclude this section, we note that we have adopted the above principle 
of taking codewords to be objects invariant under the channel transformation, from 
the work of K\"otter and Kschischang \cite{kk}. These authors have noticed that this 
principle can be applied to the channels arising in networks which are based on random 
linear network coding (RLNC). In such networks, random linear combinations of the injected 
packets are delivered to the receiver, and hence, the only property preserved by such 
a channel is the vector space spanned by those packets\footnote{\,Actually, it is 
preserved only if the transformation applied to the packets is full-rank, but this 
happens with high probability if the linear combinations are indeed random.}. From 
this observation, the authors of \cite{kk} have developed the notion of \emph{subspace 
codes}, i.e., codes in projective spaces, where codewords are taken to be vector 
subspaces of some ambient vector space (the space of all packets). There are many 
parallels between subspace and subset/multiset codes, as will be evident from the 
exposition in the subsequent sections. In fact, multiset codes can be thought of as 
a generalization of subspace codes in the sense that any basis of a subspace of the 
set of all packets is also a subset of the set of all packets. This is a consequence 
of the fact that the RLNC channel is more restrictive that the permutation channel. 
Namely, permuting the packets is a special case of delivering multiple linear 
combinations of the packets. 
\section{Subset codes}
\label{subset_codes}
Let $S$ be a nonempty finite set representing the alphabet of the given permutation 
channel. If this channel models a packet network, we can think of $S$ as the set of 
all possible packets. Let ${\mathcal P}(S)$ denote the power set of $S$, i.e., the 
set of all subsets of $S$, and ${\mathcal P}(S,\ell)$ the set of all subsets of $S$ 
of cardinality $\ell$. 
\begin{definition}
  A \emph{subset code} over an alphabet $S$ is a non-empty subset of ${\mathcal P}(S)$. 
	If ${\mathcal C}\subseteq{\mathcal P}(S,\ell)$, we say that ${\mathcal C}$ is a 
  constant-cardinality code. 
\end{definition}
\par As usual, in order to enable the receiver to recover from errors, erasures, etc., 
codewords of the code should be chosen to differ from each other as much as possible. 
A measure of "dissimilarity" of sets is needed for this purpose. A natural one, which 
is in fact a metric on ${\mathcal P}(S)$, is given by: 
\begin{equation}
\label{metricI}
  d(X,Y) = |X\bigtriangleup Y|
\end{equation}
for $X,Y\in{\mathcal P}(S)$, where $\bigtriangleup$ denotes the symmetric difference 
between sets which is defined as $X\bigtriangleup Y = (X\setminus Y)\cup(Y\setminus X)$. 
We also have: 
\begin{equation}
 \begin{aligned}
  d(X,Y) &= |X\cup Y| - |X\cap Y|  \\
         &= |X| + |Y| - 2|X\cap Y|  \\
         &= 2|X\cup Y| - |X| - |Y| .
 \end{aligned}
\end{equation}
The distance $d(X,Y)$ is the length of the shortest path between $X$ and $Y$ in the 
Hasse diagram \cite{lattice} of the lattice of subsets of $S$ ordered by inclusion. 
This diagram plays a role similar to the Hamming hypercube for the classical codes 
in the Hamming metric, and it is in fact isomorphic to the Hamming hypercube, as 
discussed below. For constant-cardinality codes, the distance between codewords is 
always even. 
\par The minimum distance of the code can now be defined as: 
\begin{equation}
  \min_{X,Y\in {\mathcal C},\;X\neq Y} d(X,Y). 
\end{equation}
Other important parameters of the code are its size $|{\mathcal C}|$, maximum 
cardinality of the codewords: 
\begin{equation}
  \max_{X\in {\mathcal C}} |X|, 
\end{equation}
and the cardinality of the ambient set, $|S|$. The code ${\mathcal C}\subseteq{\mathcal P}(S)$ 
with minimum distance $d$ and codewords of cardinality at most $\ell$ is said to be 
of type $[\log |S|,\log |{\mathcal C}|,d;\ell]$ (we assume that the logarithms are 
to the base $2$, i.e., that the lengths of the messages are measured in bits). The 
setting we have in mind is the following: The source maps a $k$-bit information sequence 
to a set of $\ell$ $n$-bit symbols/packets which are sent through the channel. In the 
channel, these symbols are permuted, some of them are deleted, some of them are received 
erroneously, and possibly some new symbols are inserted. The receiver collects all these 
symbols and attempts to reconstruct the information sequence. 
\par Having the above scenario in mind, we can also define the rate of an $[n,k,d;\ell]$ 
subset code as: 
\begin{equation}
  R=\frac{k}{n\ell}. 
\end{equation}
\subsection{Isomorphism of subset codes and binary codes}
\label{subset_binary}
When the ambient set $S$ is specified, the subsets of $S$ are uniquely determined by 
their characteristic functions (also called indicator functions). The characteristic 
function of a set $X\subseteq S$ is a mapping $\mathbbm{1}_X : S \to \{0,1\}$, defined 
by: 
\begin{equation}
  \mathbbm{1}_X(x)= \begin{cases}
                      1 & x\in X  \\ 
                      0 & x\notin X . 
                    \end{cases}
\end{equation}
If $S=\{1,\ldots,q\}$, these functions can be identified with binary sequences of 
length $q$, namely $\left(\mathbbm{1}_X(1),\ldots,\mathbbm{1}_X(q)\right)$. All set 
operations (unions, intersections, differences, etc.) on ${\mathcal P}(S)$ can be 
expressed in terms of the corresponding characteristic functions. For example, it 
is easy to see that: 
\begin{equation}
  \mathbbm{1}_{X\bigtriangleup Y} = \mathbbm{1}_X \oplus \mathbbm{1}_Y 
                                  = \left| \mathbbm{1}_X - \mathbbm{1}_Y \right|
\end{equation}
where $\oplus$ denotes the XOR operation (addition modulo 2). Also, the cardinality 
of the set $X$ can be expressed as: 
\begin{equation}
  |X| = \sum\nolimits_{x} \mathbbm{1}_{X}(x), 
\end{equation}
which is just the Hamming weight of the binary sequence 
$\left(\mathbbm{1}_{X}(1),\ldots,\mathbbm{1}_{X}(q)\right)$. 
From the above one concludes that: 
\begin{equation}
  d(X,Y) = |X\bigtriangleup Y| 
         = \sum\nolimits_{x} \left|\mathbbm{1}_X(x) - \mathbbm{1}_Y(x) \right|, 
\end{equation}
i.e., the distance between sets $X$ and $Y$ is equal to the Hamming distance between 
the binary sequences corresponding to $\mathbbm{1}_{X}$ and $\mathbbm{1}_{Y}$. The 
above reasoning implies the following interesting fact: The subset codes in 
${\mathcal P}(S)$ are just a different representation of binary codes in the space 
$\{0,1\}^{|S|}$ under the Hamming metric. Every subset code of type $[n,k,d;\ell]$ 
has a binary counterpart with parameters $\left(2^n,k,d\right)$ and maximum codeword 
weight $\ell$, and vice versa. 
\begin{example}
  Let $S=\{1,2,3,4,5\}$. Any subset of $S$ can be identified by a binary sequence 
  of length $5$; for example $\{1,2\}\leftrightarrow 11000$, $\{2,4\}\leftrightarrow 01010$, 
  etc. Consider now some code in $\{0,1\}^5$, e.g., ${\mathcal C}=\{11000, 01010, 01110, 00111\}$. 
  The subset counterpart of this code is then ${\mathcal C}_\textsc{s}=\left\{\{1,2\},\{2,4\},\{2,3,4\},\{3,4,5\}\right\}$. 
  The distance between two subsets of $S$ is the Hamming distance between the 
  corresponding binary sequences, for example: 
  \begin{equation}
    d\left(\{1,2\},\{2,4\}\right) = \left|\{1,4\}\right| = 2 = d_\textsc{h}(11000,01010)
  \end{equation}
  so that all properties of $\mathcal C$ directly translate into equivalent 
  properties of the subset code ${\mathcal C}_\textsc{s}$. 
\end{example}
\par An important consequence of this isomorphism is that subset codes can be 
constructed by using the familiar constructions of the codes for binary channels. 
Apart from the construction itself, the analogy can be used for the analysis of the 
transmission of a subset through a channel. Namely, an equivalent way of describing 
that $X$ was sent and $Y$ was received, is that the binary word $\left(\mathbbm{1}_X(1),
\ldots,\mathbbm{1}_X(q)\right)$ was sent (through the corresponding binary channel) 
and $\left(\mathbbm{1}_Y(1),\ldots,\mathbbm{1}_Y(q)\right)$ was received. Insertion 
of an element $i\notin X$ to $X$ corresponds to the $0\to1$ transition in the binary 
channel, i.e., $\mathbbm{1}_X(i)=0$ and $\mathbbm{1}_Y(i)=1$. Similarly, deletion of 
an element $i$ from $X$ corresponds to the $1\to0$ transition, and a substitution 
corresponds to both transitions (at different positions) as it is essentially a 
combination of an insertion and a deletion. Consider further the special case when 
only deletions can occur in the channel (recall that this is a frequent model of an 
end-to-end transmission over packet networks). It is easy to conclude from the above 
discussion that this channel is equivalent to the so-called Z-channel in which the 
crossover $1\to0$ occurs with probability $p$ (the probability of deletion), while 
the crossover $0\to1$ never occurs. The analysis of subset codes and the corresponding 
permutation channel with deletions is thus reduced to the analysis of binary codes 
and the binary Z-channel, respectively. Note that, for both these channels, we can 
design a binary code with appropriate parameters. The difference is that, in the 
binary channel we send a codeword (binary sequence) itself, while in the subset 
case, what we send through the channel are the \emph{positions of ones} in this 
codeword. 
\section{Multiset codes}
\label{multiset_codes}
In this section, we generalize subset codes by allowing the codewords to contain 
multiple copies of their elements. This feature is quite natural, because any interesting 
classical code over a finite alphabet contains codewords with multiple occurrences of some 
symbols. In our case, the codewords are sets, and the objects we need -- sets with repetitions 
of elements allowed -- are known as \emph{multisets} \cite{aigner}. A multiset is defined 
with a set of elements it contains, and numbers of occurrences of each element in the set. 
The number of occurrences of an element, called its multiplicity, is assumed to be finite. 
Finally, we note that multisets are also invariant under permutations and hence are suitable 
for the permutation channel. 
\par Let ${\mathcal M}(S)$ denote the collection of all multisets over an alphabet $S$. 
Operations on ${\mathcal M}(S)$, such as union, intersection, difference, etc., are 
straightforward extensions of the corresponding operations on sets. It is easiest to 
illustrate them on a simple example. 
\begin{example}
  Let $X=\{1,2,2,2,3\}$ and $Y=\{1,2,2,3,3,4\}$ be two multisets over $S=\{1,2,3,4\}$. 
  Then $X\cap Y=\{1,2,2,3\}$, $X\cup Y=\{1,2,2,2,3,3,4\}$, $X\setminus Y=\{2\}$, 
  $Y\setminus X=\{3,4\}$. The cardinality of $X$ and $Y$ is $|X|=5$, $|Y|=6$, 
  respectively. 
\end{example}
\par Codes in the space ${\mathcal M}(S)$ are defined analogously to the codes in 
${\mathcal P}(S)$. In the following, ${\mathcal M}(S,\ell)$ denotes the collection 
of all multisets of cardinality $\ell$. 
\begin{definition}
  A \emph{multiset code} over $S$ is a nonempty subset of ${\mathcal M}(S)$. If 
	${\mathcal C}\subseteq{\mathcal M}(S,\ell)$, we say that $\mathcal C$ is a 
	constant-cardinality code. 
\end{definition}
\par Note that ${\mathcal M}(S)$ is an infinite space. It is always assumed, 
however, even if not explicitly stated, that a multiset code is finite. In 
particular, we have in mind multiset codes with an upper bound on the cardinality 
of the codewords, which is a reasonable constraint from the "practical" point 
of view. In any case, we shall mostly deal with constant-cardinality codes 
where this issue does not arise. 
\par It is easy to see that the function $d$ from \eqref{metricI} is a metric on 
${\mathcal M}(S)$, and hence we can define the minimum distance of a multiset code 
in the same way as for subset codes. Other code parameters are also defined in the 
same way as for subset codes and those definitions will not be repeated here. 
\par We now prove a simple, but basic fact about the correcting capabilities of 
multiset codes. The analogous statement for the special case of subset codes is 
proven in \cite{subset}. 
\begin{theorem}
\label{thm}
  A multiset code $\mathcal C$ with minimum distance $d$ is capable of correcting 
  any pattern of $s$ insertions, $\rho$ deletions, and $t$ substitutions, as long 
	as $2(s+\rho+2t)<d$. 
\end{theorem}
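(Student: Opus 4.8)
The plan is to prove the standard coding-theory decoding bound: minimum distance $d$ guarantees correction whenever the received word lies strictly inside the "ball" around the true codeword that does not overlap any ball around another codeword. The key is to quantify how far, in the metric $d(X,Y)=|X\triangle Y|$, the received multiset $Y$ can be from the transmitted codeword $X$ after $s$ insertions, $\rho$ deletions, and $t$ substitutions.

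First I would analyze the effect of each channel operation on the distance $d(X,Y)$. An insertion adds one copy of some element, changing the cardinality by $+1$ and thus contributing $1$ to the symmetric difference; a deletion removes one copy, contributing $1$; and a substitution is a deletion followed by an insertion, so it contributes $2$. Hence if $X$ is sent and $Y$ is received through a pattern of $s$ insertions, $\rho$ deletions, and $t$ substitutions, then
\begin{equation}
\label{eq:ballradius}
  d(X,Y) \le s + \rho + 2t.
\end{equation}
This is the crucial geometric fact and the main (modest) obstacle: one must verify that each elementary operation changes the symmetric-difference cardinality by exactly the claimed amount regardless of the current multiplicities, which follows directly from the multiset definitions of $\triangle$ and $|\cdot|$ given earlier.

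Next I would set up the standard minimum-distance decoding argument. Suppose two distinct codewords $X,X'\in{\mathcal C}$ could both produce the same received multiset $Y$ under patterns of the stated sizes. By \eqref{eq:ballradius} we would have $d(X,Y)\le s+\rho+2t$ and $d(X',Y)\le s+\rho+2t$. Applying the triangle inequality for the metric $d$ then yields
\begin{equation}
\label{eq:triangle}
  d(X,X') \le d(X,Y) + d(Y,X') \le 2(s+\rho+2t).
\end{equation}
But by hypothesis $2(s+\rho+2t)<d\le d(X,X')$, a contradiction. Therefore no two codewords can be confused, and the decoder that outputs the nearest codeword to $Y$ recovers $X$ uniquely.

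I expect the only genuine subtlety to be the careful bookkeeping in \eqref{eq:ballradius}: one should confirm that the bound holds as an inequality rather than equality, since for example an insertion of an element already present, followed later by relevant operations, could in principle cancel against earlier changes, so $d(X,Y)$ may be strictly smaller than $s+\rho+2t$. The inequality direction is all that is needed, and it suffices to bound the contribution of each of the $s+\rho+t$ elementary operations separately and sum. Once \eqref{eq:ballradius} is established, the remainder is the classical packing argument via the triangle inequality, identical in form to the Hamming-metric case and to the subset-code proof in \cite{subset}.
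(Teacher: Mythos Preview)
Your proposal is correct and follows essentially the same route as the paper: both establish the key bound $d(X,Y)\le s+\rho+2t$ and then invoke the standard minimum-distance decoding argument. The only cosmetic difference is that the paper derives the bound via the identity $d(X,Y)=|X|+|Y|-2|X\cap Y|$ together with $|X\cap Y|\ge |X|-\rho-t$ and $|Y|=|X|-\rho+s$, whereas you account for the contribution of each elementary operation directly; the final step (paper: $d(X,Y)\le\lfloor(d-1)/2\rfloor$; you: triangle inequality with a hypothetical $X'$) is the same argument in two equivalent phrasings.
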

\begin{proof}
  Let $X\in {\mathcal C}$ be the multiset which is transmitted through the channel. 
  Let $Y$ be the received multiset. If $\rho$ packets from $X$ have been deleted, and $s$ 
  new packets have been inserted, then we easily deduce that $|X\cap Y|\geq |X|-\rho$ 
  and $|Y|=|X|-\rho+s$. Since each substitution is essentially a combination of 
  one deletion and one insertion, the actual number of deletions and insertions is 
  $\rho+t$ and $s+t$, respectively, wherefrom one concludes that $|X\cap Y|\geq |X|-\rho-t$ 
  and $|Y|=|X|-\rho+s$, and that 
  \begin{equation}
     d(X,Y) = |X| + |Y| - 2|X\cap Y| \leq s + \rho + 2t. 
  \end{equation}
  Now, if the assumption $2(s +\rho + 2t)<d$ holds, then $d(X,Y)\leq\lfloor\frac{d-1}{2}\rfloor$ 
  and therefore $X$ can be recovered from $Y$ by the minimum distance decoder. 
\end{proof}
\par If only deletions can occur in the channel, then $d(X,Y)=\rho$ and the sent 
codeword is recoverable whenever $\rho\leq\lfloor\frac{d-1}{2}\rfloor$. 
\par An obvious advantage that multiset codes have over subset codes is the code 
rate improvement which is a consequence of them being defined in a bigger space: 
\begin{equation}
  \left|{\mathcal M}(S,\ell)\right|=\binom{q+\ell-1}{\ell} > 
                                 \binom{q}{\ell}=\left|{\mathcal P}(S,\ell)\right|. 
\end{equation}
Further, when $|S|=q$ is "small", it is necessary to use multiset codes because, 
unlike subset codes, they allow the cardinality of the codewords to be larger than 
the cardinality of the alphabet. For example, multiset codes with arbitrary minimum 
distance (and hence, arbitrary correction capability) can be defined even over a 
binary alphabet. 
\subsection{Isomorphism of multiset codes and integer codes}
\label{multiset_integer}
The isomorphism between subset codes and binary codes, which has many important 
consequences, as discussed in Section \ref{subset_binary}, also has an appropriate 
generalization in the multiset framework. Namely, multiset codes turn out to be 
equivalent to integer codes under the so-called Manhattan metric, and this equivalence 
is illustrated next. 
\par Multisets over an alphabet $S$ can be described by their \emph{multiplicity 
functions} in the same way subsets are described by their characteristic functions 
(in fact, that is how multisets are usually defined formally \cite{aigner}). The 
multiplicity function of a multiset $X$ over $S$ is a mapping 
$\mathbbm{m}_X : S \to \mathbb{Z}_{\geq0}$, where $\mathbbm{m}_X(x)$ is the number 
of occurrences of $x$ in $X$. Clearly, a multiset is a set if and only if the range 
of its multiplicity function is $\{0,1\}$. Operations on multisets can be expressed 
in terms of their multiplicity functions, for example: 
\begin{equation}
 \begin{aligned}
  \mathbbm{m}_{X\cup Y} &= \max\{\mathbbm{m}_X , \mathbbm{m}_Y\} ,  \\ 
	\mathbbm{m}_{X\cap Y} &= \min\{\mathbbm{m}_X , \mathbbm{m}_Y\} ,  \\ 
	\mathbbm{m}_{X\setminus Y} &= \max\{0 , \mathbbm{m}_X - \mathbbm{m}_Y\} , 
 \end{aligned}
\end{equation}
while the cardinality of a multiset is expressed as: 
\begin{equation}
  |X| = \sum\nolimits_{x} \mathbbm{m}_{X}(x) . 
\end{equation}
If the alphabet is $S=\{1,2,\ldots,q\}$, the multiplicity function of a multiset $X$ 
is uniquely specified by a sequence 
$\left(\mathbbm{m}_X(1),\ldots,\mathbbm{m}_X(q)\right)\in\mathbb{Z}_{\geq0}^q$. 
Therefore, the space ${\mathcal M}(S)$ is essentially equivalent to the space 
$\mathbb{Z}_{\geq0}^q$. Further, the distance between multisets is: 
\begin{equation}
  d(X,Y) = |X\bigtriangleup Y| 
         = \sum\nolimits_{x} \left|\mathbbm{m}_X(x) - \mathbbm{m}_Y(x) \right| , 
\end{equation}
which is the familiar $\ell_1$ distance, also known as the Manhattan metric. Therefore, 
multiset codes are basically just another description of the codes in $\mathbb{Z}_{\geq0}^q$ 
under the Manhattan metric. Constant-cardinality codes are then equivalent to the codes on 
the "sphere" $\left\{(x_1,\ldots,x_q) : x_i\in\mathbb{Z}_{\geq0}, \sum_i x_i = \ell\right\}$. 
\section{Examples of codes for the permutation channel}
\label{examples}
In this section, we describe a simple way to construct subset and multiset codes, 
and discuss some of the properties of the obtained codes. 
\subsection{Example of subset codes}
\par A straightforward way of obtaining codes for the permutation channel is to 
use some classical error-correcting code, and add a sequence number to every symbol 
of the codeword so that the order of symbols can be restored at the receiving side. 
This approach is illustrated below. 
\par Let $\mathcal A$ be a finite alphabet with $|{\mathcal A}|=q$. Observe some 
code ${\mathcal C}$ over $\mathcal A$ with parameters $(\ell,k,d)$, meaning that 
$|{\mathcal C}|=q^k$, the codewords of ${\mathcal C}$ are $q$-ary sequences of length 
$\ell$, and the Hamming distance between any two codewords is at least $d$. For 
any codeword $\boldsymbol{p}=(p_1,\ldots,p_\ell)\in{\mathcal C}$, create a sequence 
$(t_1,\ldots,t_\ell)$, where $t_i=i\circ p_i$ is a new symbol obtained 
by prepending a sequence number to the symbol $p_i$ ($\circ$ denotes the concatenation 
of strings). This mapping is clearly injective and the set of all sequences thus 
obtained defines a code ${\mathcal C}'$ over an alphabet $S=\{1,\ldots,\ell\}\times{\mathcal A}$ 
with parameters $(\ell,k,d)$. The codewords of ${\mathcal C}'$ are invariant under 
permutations, i.e., any permutation of $(t_1,\ldots,t_\ell)$ has the same meaning to 
the receiver because it can recover $(p_1,\ldots,p_\ell)$ from the sequence numbers. 
Therefore, one can imagine the carrier of information being a \emph{set} 
$\left\{t_1,\ldots,t_\ell\right\}$, and hence this simple construction yields an 
example of a subset code ${\mathcal C}_\textsc{s}$ over $S$. The code has $q^k$ 
codewords, each of cardinality $\ell$. The minimum (subset) distance of the code 
is easily determined by observing two codewords: 
\begin{equation}
 \begin{matrix}
  1\circ p_1 & 2\circ p_2 & \ldots & \ell\circ p_\ell \\
  1\circ r_1 & 2\circ r_2 & \ldots & \ell\circ r_\ell .
 \end{matrix}
\end{equation}
It is evident that the cardinality of the intersection of the subset codewords 
$P=\{1\circ p_1,\ldots,\ell\circ p_\ell\}$ and $R=\{1\circ r_1,\ldots,\ell\circ r_\ell\}$ 
is equal to the number of positions where the sequences $\boldsymbol{p}=(p_1,\ldots,p_\ell)$ 
and $\boldsymbol{r}=(r_1,\ldots,r_\ell)$ agree, which is, on the other hand, equal to $\ell$ 
minus the Hamming distance of these two sequences. Therefore, 
\begin{equation}
  d(P,R) = 2d_\textsc{h}(\boldsymbol{p},\boldsymbol{r}), 
\end{equation}
and hence the minimum (subset) distance of ${\mathcal C}_\textsc{s}$ is $2d$. To 
conclude, this construction yields a subset code ${\mathcal C}_\textsc{s}$ of type 
$[\log q\ell,k\log q, 2d; \ell]$. 
\par Note that the decoding procedure for ${\mathcal C}_\textsc{s}$ is the same as 
for $\mathcal C$ once the codeword of $\mathcal C$ is recovered by using sequence 
numbers. Note also that recovering $(p_1,\ldots,p_\ell)$ from 
$\{1\circ p_1,\ldots,\ell\circ p_\ell\}$ reduces deletions to \emph{erasures}, 
while insertions and substitutions are reduced to \emph{errors}. Namely, if 
$i\circ p_i$ has been deleted, the receiver will be able to deduce that the symbol 
at the $i$th position is missing. Similarly, if $j\circ p_j$ has been inserted and 
the receiver now possesses two symbols with the sequence number $j$, it will choose 
one at random, possibly resulting in an error at the $j$th position. Hence, when 
subset codes constructed in this way are used, the permutation channel (over $S$) 
with insertions, deletions, and substitutions, reduces to the classical discrete 
memoryless channel (over $\mathcal A$) with errors and erasures. 
\par The codes described above are, to the best of our knowledge, the only type 
of error-correcting codes for the permutation channel described in the literature 
(see, e.g., the construction of the "outer" code in \cite{schul}). As we have 
illustratred, they are in fact only a special case of the more general notion of 
subset codes. We note that better subset codes can be constructed via the isomorphism 
given in Section \ref{subset_binary}, i.e., by using the familiar constructions of 
binary codes (see also \cite{subset}). 
\subsection{Example of multiset codes}
We next describe a simple construction which yields an example of a multiset code 
(which is not a subset code). It is also is based on "classical" codes and sequence 
numbers. 
\par Again, let $\mathcal A$ be a finite alphabet with $|{\mathcal A}|=q$ symbols, 
and ${\mathcal C}$ a code over $\mathcal A$. For any codeword 
$\boldsymbol{p}=(p_1,\ldots,p_\ell)\in{\mathcal C}$, we create a sequence 
$(t_1,\ldots,t_\ell)$ by prepending sequence numbers to the symbols 
of $\boldsymbol{p}$, but in such a way that runs of identical symbols in $\boldsymbol{p}$ 
are given the same sequence number. For example, the sequence $(a, a, b, b, c, b)$, 
where $a,b,c\in{\mathcal A}$, is mapped to 
$(1\circ a, 1\circ a, 2\circ b, 2\circ b, 3\circ c, 4\circ b)$. 
The obtained sequence is invariant under permutations, and it is easily concluded, 
similarly to the example from the previous subsection, that this procedure yields a 
multiset code ${\mathcal C}_\textsc{m}$ over $S$. The decoding procedure for ${\mathcal C}_\textsc{m}$ 
is again the same as for $\mathcal C$ once the codeword is recovered from the sequence 
numbers. In this case, however, recovering $\boldsymbol{p}$ from $\{i_1\circ p_1,\ldots,i_\ell\circ p_\ell\}$ 
reduces deletions to \emph{deletions}, insertions to either \emph{insertions} or 
\emph{substitutions}, and substitutions to \emph{substitutions} (i.e., \emph{errors}). 
Namely, if the symbol $i_j\circ p_j$ has been deleted, the receiver cannot deduce (in 
general) which symbol has been deleted because there could have been multiple copies 
of this or some other symbols. Similar reasoning applies for the other cases. Therefore, 
the code $\mathcal C$ has to be resilient to insertions, deletions, and substitutions. 
\par Finally, let us determine the parameters of ${\mathcal C}_\textsc{m}$ from those 
of $\mathcal C$. Let $\mathcal C$ be of type $(\ell, k, d)$, where $k$ and $\ell$ are 
as before, and $d$ is the minimum Levenshtein distance \cite{leven}, which is the relevant 
distance measure for insertion/deletion channels (it is defined as the minimum number 
of insertions and deletions that transform one sequence to the other). 
Observe two multiset codewords $P$ and $R$: 
\begin{equation}
 \begin{matrix}
  i_1\circ p_1 & i_2\circ p_2 & \ldots & i_\ell\circ p_\ell \\
  j_1\circ r_1 & j_2\circ r_2 & \ldots & j_\ell\circ r_\ell , 
 \end{matrix}
\end{equation}
where, $(i_m)$ and $(j_m)$ are nondecreasing integer sequences, as explained above. 
Unfortunately, the distance between $P$ and $R$ in general cannot be expressed via 
the Levenshtein (or Hamming) distance between $\boldsymbol{p}$ and $\boldsymbol{r}$, 
and hence the minimum distance of ${\mathcal C}_\textsc{m}$ cannot be inferred from 
$d$. It is easy to conclude, however, that the distance between two multisets obtained 
in this way is greater than or equal to the Levenshtein distance between the original 
sequences, and therefore the code ${\mathcal C}_\textsc{m}$ is of type 
$[\log q\ell,k\log q, d_\textsc{m}; \ell]$, where $d_\textsc{m}\geq d$. 
\par As noted above, one possible decoding procedure for ${\mathcal C}_\textsc{m}$ 
is to first use the sequence numbers to obtain the right ordering of symbols, and 
then apply the decoding algorithm for $\mathcal C$ to the resulting sequence. If 
this procedure is used, then one easily concludes that the number of insertions 
and deletions which can be corrected is at most $\lfloor\frac{d-1}{2}\rfloor$, and 
therefore, the "effective minimum distance" of the code is $d$. 
\par As a final note here, we would like to stress that the above construction merely 
serves as an illustration of a constant-cardinality mutiset code. The general method 
of construction that can be used is via the corresponding constant-weight integer codes 
in the Manhattan metric, as explained in Section \ref{multiset_integer}. It appears, 
however, that these codes have not been studied thoroughly before, and it remains an 
interesting problem for future research to explore further their properties, and obtain 
explicit constructions and decoding algorithms. 
\section{Conclusion}
\label{conclusion}
We have presented a framework for forward error correction in the permutation channels. 
We have introduced multiset codes as relevant constructs for correcting insertions, 
deletions, and substitutions in such channels. Some basic properties of multiset codes 
have been established. The framework presented is analogous to the one introduced recently 
by K\"otter and Kschischang for the operator channels, and can be viewed as its extension. 
As a consequence, a unified view on coding for RLNC networks and multipath routed packet 
networks is obtained. 
\section*{Acknowledgment}
The authors would like to acknowledge the financial support of the Ministry of 
Science and Technological Development of the Republic of Serbia (grants No. 
TR32040 and III44003). 
\end{document}